\newtheorem{theorem}{Theorem}
\newtheorem{corollary}{Corollary}
\newcommand{\qed}{\unskip\hspace*{1em}\hspace{\fill}$\Box$}
\newenvironment{proof}[1][Proof]{\begin{trivlist}
  \item[\hskip \labelsep {\it #1:}]}{%
    \qed\end{trivlist}}
\newcommand{\pav}[0]{\ensuremath{\mathit{PAV}}\xspace}
\newcommand{\av}[0]{\ensuremath{\mathit{AV}}\xspace}	
\newcommand{\sav}[0]{\ensuremath{\mathit{SAV}}\xspace}
\newcommand{\rav}[0]{\ensuremath{\mathit{RAV}}\xspace}
\newcommand{\pref}{\succsim\xspace}
\newcommand{\AB}{\ensuremath{{A}}}
\definecolor{officialblue}{RGB}{0, 93, 170}					
\begin{document}

\title{Computational Aspects of Multi-Winner Approval Voting}

\author{Haris Aziz \and Serge Gaspers\\ NICTA and UNSW \\ Sydney, Australia
\And
Joachim Gudmundsson \\ University of Sydney and NICTA\\ Sydney, Australia
\AND Simon Mackenzie, Nicholas Mattei \and Toby Walsh \\ NICTA and UNSW \\ Sydney, Australia}

\maketitle

\begin{abstract}
We study computational aspects of three prominent voting
rules that use approval ballots to elect multiple winners.
These rules are \emph{satisfaction approval voting},
\emph{proportional approval voting}, and
\emph{reweighted approval voting}.
We first show that
computing the winner for proportional
approval voting is NP-hard, closing a long standing open problem. As none of the rules
are strategyproof, even for
dichotomous preferences,
we study various strategic aspects of the rules. In particular,
we examine
the computational complexity of
computing a best response for both a single
agent and a group of agents. In many
settings, we show that it is NP-hard
for an agent or agents to compute how
best to vote given a fixed set of approval ballots from
the other agents.
\end{abstract}


\section{Introduction}

The aggregation of possibly conflicting preference is a central
problem in artificial intelligence \cite{Coni10a}. Agents
express preferences over candidates and a voting rule selects
a winner or winners based on these preferences.
We focus here on rules that select $k$ winners where $k$
is fixed in advance. This covers settings including
parliamentary elections, the hiring of
faculty members, and movie recommendation systems~\cite{OZE13a}.
Multi-winner rules can also be used to select a
committee~\cite{Ratl06a,LMM07a,ELS11a}.

Generally, in approval-based voting rules, an agent approves of (votes for)
a subset of the candidates.  The most straightforward way to aggregate these
votes is to have every approval for a candidate contribute one point to that candidate;
yielding the rule known as Approval Voting (\av).
Approval Voting is an obvious type of voting rule to
extend from the single winner to the multiple
winner case. Unlike, say, plurality voting
where agents nominate just their most preferred
candidate, approval ballots permit agents to
identify multiple candidates that they wish
to win.
Approval voting has many desirable
properties in the single winner case~\cite{Fish78d,BMS06a},
including its `simplicity,
propensity to elect Condorcet winners (when they exist),
its robustness to manipulation and its monotonicity'~\cite{LaSa10a}. However for the case of multiple 
winners, the merits of $\av$  are `less clear'~\cite{LaSa10a}. In particular,  
for the multi-winner case, $\av$ does address more egalitarian concerns such as proportional representation.


  Over the years, various methods for counting approvals
  have been introduced in the literature, each attempt to 
  address the fairness concerns 
  when using $\av$ for multiple winners~\cite{Kilg10a}.
One could, for instance, reduce the weight of an approval from a particular agent
based on how many other candidates the agent approves of have
been elected, as in \emph{Proportional Approval Voting (\pav)}.
Another way to ensure diversity across agents is vote
across a set of rounds.  In each round, the candidate with the most approvals
wins.  However, in each subsequent round we decrease the weight
of agents who have already had a candidate elected in earlier rounds;
this method is implemented in \emph{Reweighted Approval Voting (\rav)}.
Finally, \emph{Satisfaction Approval Voting (\sav)} modulates
the weight of approvals with a satisfaction
score for each agent, based on the ratio of approved candidates appearing
in the committee to the agent's total number of approved candidates.

These approaches to generalizing approval voting to the case
of multiple winners each have their own benefits and drawbacks.
Studying the positive or negative properties of these multi-winner
rules can help us make informed, objective decisions about which
generalization is better depending on the situations to which we are applying a particular multi-winner rule \cite{EFSS14a}. 
Though $\av$ is the most widely known of these rules, $\rav$ has been used, for example, in elections in Sweden. 
Rules other than $\av$ may have better axiomatic properties in the multi-winner setting and thus, motivate 
our study. For example, each of $\pav$, $\sav$, and $\rav$ have a more egalitarian objective than $\av$. 
Steven Brams, the main proponent of AV in single winner elections, has argued that $\sav$ is more suitable 
for equitable representation in multiple winner elections \cite{BrKi10a}. 

%
%
We undertake a detailed study of
computational aspects of $\sav$, $\pav$,
and  $\rav$.
We first consider the computational complexity
of computing the winner, a necessary result if any voting rule is expected to be used in 
practice. Although $\pav$ was introduced over
a decade ago, a standing open question has been the computational complexity of determining the winners, having 
only been referred to as ``computationally demanding'' before \cite{Kilg10a}.  We close this standing 
open problem, showing that winner determination for $\pav$ is NP-hard.
Our reduction applies to a host of
approval based, multi-winner rules in which
the scores contributed to an approved candidate by an agent
diminish as additional candidates approved by the agent are elected to the committee.

We then consider strategic voting for these rules.
We show that, even with
dichotomous preferences,
$\sav$, $\pav$ and $\rav$ are not strategyproof.
That is, it
may be beneficial for agents to mis-report their true
preferences. We therefore consider
computational aspects of manipulation.
We prove that finding the best response
given the preferences of other agents is NP-hard
under a number of conditions for $\pav$, $\rav$, and $\sav$.
In particular, we examine the complexity
of checking whether  an
agent or a set of agents can make a given candidate or
a set of candidates win.
These results offer support for $\rav$ over $\pav$ or $\sav$ as it is the only rule for which winner determination
is computationally easy but manipulation is hard.

\section{Related Work}

An important branch of social choice concerns determining
how and when agents can benefit by misreporting
their preferences. In computational social choice, this problem
is often studied through the lens of computational complexity 
\cite{BTT89a,FaPr10a,FHH10a}.  If it is computationally hard
for an agent to compute a beneficial misreporting of their preferences for a particular
voting rule, the rule is said to be resistant to manipulation.  If it 
is computationally difficult to compute a misreport, agents may decide to be truthful, 
since they cannot always easily manipulate. 
Connections have been made between manipulation and
other important questions in social choice 
such as deciding when to terminate preference elicitation and 
determining possible winners \cite{KoLa05a}.

Surprisingly, there has only been limited consideration
of computational aspects of multi-winner elections.
Exceptions include work by Meir et al.~\shortcite{MPR08b} which
considers single non-transferable voting, approval voting,
$k$-approval, cumulative voting
and the proportional schemes of Monroe, and of Chamberlin and
Courant. Most relevant to our study is that
for approval voting, Meir et al.
prove that manipulation with general
utilities and control by adding/deleting
candidates are both polynomial to compute, but
control by adding/deleting agents
is NP-hard. Another work that considers
computational aspects of multi-winner elections is \citeauthor{OZE13a}~\shortcite{OZE13a},
but their study is limited to $k$-approval and scoring
rules. Finally, the control and bribery problems for $\av$ and two other approval voting variants are well catalogued
by \citeauthor{BEH+10a}~\shortcite{BEH+10a}, demonstrating that $\av$ is generally resistant to bribery but 
susceptible to most forms of control when voters have dichotomous utility functions.

The Handbook of Approval Voting discusses
various approval-based multi-winner rules
including $\sav$, $\pav$ and $\rav$.
Another prominent multi-winner rule in the Handbook
is \textit{minimax approval voting}~\cite{BKS07a}.
Each agent's approval ballot and the winning set can be seen as a binary vector.
Minimax approval voting selects
the set of $k$ candidates that minimizes the maximum
Hamming distance from the submitted ballots.
Although minimax approval voting is a natural
and elegant rule, LeGrand et al.~\shortcite{LMM07a} showed that computing
the winner set is unfortunately NP-hard.
Strategic issues and approximation questions
for minimax approval voting are covered in \cite{CKM10a}
and \cite{GNR13} where the problem is known
as the ``closest string problem.''


The area of multi-winner approval voting is closely related to
the study of proportional representation when
selecting a committee (Skowron et al.~\citeyear{SFS13a,SFS13b}).
Ideas from committee selection have therefore been used in computational
social choice to ensure diversity when selecting a collection
of objects \cite{LuBo11a}.  Understanding approval voting schemes which 
select multiple winners, as the rules we consider often do, is an important
area in social choice with applications in a variety of settings from committee
selection to multi-product recommendation \cite{EFSS14a}.

\section{Formal Background}

We consider the social choice setting $(N,C)$ where $N=\{1,\ldots, n\}$ is the
set of agents and $C=\{c_1,\ldots, c_m\}$ is the set of candidates. Each
agent $i\in N$ has a complete and transitive preference relation $\pref_i$
over $C$. Based on these preferences, each agent expresses an approval
ballot $A_i\subset C$ that represents the subset of candidates that he
approves of, yielding a set of approval ballots $\AB = \{A_1,\ldots, A_n\}$. We will consider approval-based multi-winner rules that take
as input $(C,\AB, k)$ and return the subset $W \subseteq C$ of size $k$ that is the winning set.

\subsection{Approval Voting ($\av$)}
\av finds a set $W \subseteq C$ of size $k$ that maximizes the total score $App(W)=\sum_{i\in N}{|W \cap A_i|}$.
That is, the set of \av winners are those candidates receiving the most points across all submitted ballots.
\av has been adopted by several academic and
professional societies such as the American Mathematical Society (AMS),
the Institute of Electrical and Electronics Engineers (IEEE), and
the International Joint Conference on Artificial Intelligence.

\subsection{Satisfaction Approval Voting ($\sav$)}
An agent's satisfaction is the fraction of his or her approved
candidates that are elected. \sav maximizes the sum of such
scores. Formally,
\sav  finds a set $W \subseteq C$ of size $k$ that maximizes
$Sat(W)=\sum_{i\in N}\frac{|W\cap A_i|}{|A_i|}$.
The rule was proposed by~\cite{BrKi10a} with the aim
of representing more diverse interests than~$\av$.

\subsection{Proportional Approval Voting ($\pav$)}

In \pav, an agent's satisfaction score is $1+1/2+1/3 \cdots 1/j$
where $j$ is the number of his or her
approved candidates that are selected in $W$.
Formally, $\pav$ finds a set $W \subseteq C$ of size $k$
that maximizes the total score $\pav(W)=\sum_{i\in N}r(|W\cap A_i|)$ where $r(p)=\sum_{j=1}^p\frac{1}{j}$.
$\pav$ was proposed by the mathematician Forest Simmons in 2001
and captures the idea of diminishing returns --- an individual agent's preferences
should count less the more he is satisfied.

\subsection{Reweighted Approval Voting ($\rav$)}

\rav converts \av into a multi-round rule, selecting a candidate
in each round and then reweighing the approvals for the subsequent rounds.
In each of the $k$ rounds, we select an unelected candidate to add to
the winning set $W$ with the highest ``weight'' of approvals. In each round we reweight each agents
approvals, assigning for all $i \in N$ the weight $\frac{1}{1+|W \cap A_i|}$ to agent $i$.
$\rav$ was invented by the
Danish polymath Thorvald Thiele in the early 1900's. $\rav$ has also been referred to as ``\emph{sequential proportional AV}''~\cite{BrKi10a},
 and was used briefly in Sweden during the early 1900's.

\smallskip	
Tie-breaking is an important issue to consider when 
investigating the complexity of manipulation and winner determination
problems as it can have a significant impact on the complexity 
of reasoning tasks (Obraztsova et al.~\citeyear{OEH11a}, \citeauthor{ObEl11a}~\citeyear{ObEl11a}, Aziz et al.~\citeyear{AGM+13a}).  We make the worst-case 
assume that a tie-breaking rule takes the form of a linear 
order over the candidates that is given as part of the problem input and favors
the preferred candidate; as is common 
in the literature on manipulation (Bartholdi et al.~\citeyear{BTT89a}, \citeauthor{FaPr10a} \citeyear{FaPr10a}, Faliszewski et al.~\citeyear{FHH10a}).  Note that many of our proofs
are independent of the tie-breaking rule, 
in which case the hardness results transfer to any arbitrary tie-breaking rule.

\section{Winner Determination}
	
We first examine one of the most basic computational questions,
computing the winners of a voting rule.

\vspace{.1cm}
\noindent
\textbf{Name}: \textsc{Winner Determination (WD)}. \\
\noindent
\textbf{Input:} An approval-based voting rule $R$,
a set of approval ballots $\AB$ over the
set $C$ of candidates, and a committee size $k\in \mathbb{N}$.\\
\noindent
\textbf{Question:} What is the winning set, $W\subseteq C$, with $|W|=k$?
\vspace{.1cm}

Firstly, we observe that WD is polynomial-time computable for $\sav$, $\rav$, and $\av$. Although $\rav$ is polynomial-time to compute, it
has been termed ``computationally difficult'' to analyze
in \cite{Kilg10a}. We
provide support for this claim
by showing that computing a best response for $\rav$ is NP-hard (Theorem~\ref{thm:RAV-respons}).  We close the computational complexity of WD for $\pav$ in this section.


\begin{theorem} \label{thm:pav-coNP}
 WD for $\pav$ is NP-complete, even if each agent approves of two candidates.
\end{theorem}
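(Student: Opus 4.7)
My plan is to reduce from \textsc{Independent Set} on cubic (3-regular) graphs, which is a well-known NP-hard problem, and exploit the fact that restricting each agent to approve exactly two candidates turns an approval profile into an edge-weighted graph structure where the PAV score has a clean closed form.

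First I would set up the reduction. Given a cubic graph $G=(V,E)$ and integer $k$, let the candidate set be $C = V$, let the committee size be $k$, and for each edge $\{u,v\} \in E$ introduce one agent whose approval set is $\{u,v\}$. Every agent approves exactly two candidates, as required.

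Next I would analyze the PAV score. Because each approval set has size two, each agent contributes $r(0)=0$, $r(1)=1$, or $r(2)=3/2$ to the objective, depending on whether $0$, $1$, or both of its approved candidates lie in the committee $W$. Writing $e_1(W)$ and $e_2(W)$ for the number of edges with, respectively, exactly one or both endpoints in $W$, the PAV score is
\[
\pav(W) \;=\; e_1(W) + \tfrac{3}{2}\, e_2(W).
\]
Since $G$ is $3$-regular, $\sum_{v \in W} \deg(v) = 3k$, and this sum also equals $e_1(W) + 2\,e_2(W)$. Substituting gives
\[
\pav(W) \;=\; 3k - 2 e_2(W) + \tfrac{3}{2} e_2(W) \;=\; 3k - \tfrac{1}{2}\, e_2(W).
\]
Thus, over all $W \subseteq V$ with $|W|=k$, the PAV score is maximized exactly when $e_2(W) = 0$, i.e.\ when $W$ is an independent set. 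In particular, the decision version ``is there a committee of size $k$ with PAV score at least $3k$?'' is equivalent to ``does $G$ have an independent set of size $k$?''.

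Finally I would note NP membership: a committee $W$ is a polynomial-size certificate, and its PAV score can be computed in polynomial time and compared to the threshold. Together with the reduction this gives NP-completeness. The one step I would expect to require some care is the choice of source problem: starting from \textsc{Independent Set} in general graphs would leave the $e_1(W)$ term uncontrolled, so the critical idea is to pick a degree-regular NP-hard variant (cubic graphs suffice) so that $\sum_{v \in W}\deg(v)$ is constant and $\pav(W)$ becomes a monotone function of $e_2(W)$ alone.
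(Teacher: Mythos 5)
Your proof is correct, and it realizes the same core idea as the paper's proof --- reduce from \textsc{Independent Set} with one agent per edge approving the edge's two endpoints, so that a maximum-score committee is forced to be an independent set --- but it handles the crucial degree-regularity issue in a genuinely different way. The paper works with an \emph{arbitrary} graph and regularizes it inside the reduction: for each vertex $v$ it pads the instance with $deg(G)-deg(v)$ dummy candidates and dummy agents so that every vertex-candidate is approved by exactly $deg(G)$ agents, and then argues via marginal gains (adding a candidate raises the score by at most $deg(G)$, with equality iff it is nonadjacent to the committee) that a committee of size $t$ scores $deg(G)\cdot t$ iff it is an independent set. You instead push the regularization into the choice of source problem, invoking NP-hardness of \textsc{Independent Set} on cubic graphs, which lets you dispense with all gadgets and write the score in closed form as $3k-\tfrac12 e_2(W)$; the equivalence with independence is then immediate. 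Your version buys a cleaner, shorter argument and a simpler instance (candidates are exactly the vertices); the paper's version buys self-containedness (it needs only the textbook NP-hardness of \textsc{Independent Set} on general graphs) and, because it keeps $k=t$ as a parameterized reduction from the W[1]-hard general problem, it directly yields the W[1]-hardness corollary stated afterwards --- a consequence your reduction from the cubic case would not give, since \textsc{Independent Set} is FPT on bounded-degree graphs. If you cite the cubic restriction, do name a reference (e.g.\ Garey, Johnson and Stockmeyer, 1976) since the whole argument leans on it.
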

\begin{proof}
The problem is in NP since we merely need as
witness a set of candidates with $\pav$ score $s$.

To show hardness we give a reduction from the NP-hard \textsc{Independent Set} problem \cite{GaJo79}:
Given $(G,t)$, where $G = (V, E)$ is
an arbitrary graph and $t$ an integer, is there an independent set of size $t$ in $G$.  An independent
set is a subset of vertices $S \subseteq V$ such that no edge of $G$ has both endpoints in $S$.
For a graph $G$, we build a $\pav$ instance for which
a winning committee of size $t$ corresponds to
an independent set in $G$ of size $t$, and vice-versa.

Consider a graph $G=(V,E)$, and define the following $\pav$ instance, $(N,C,\AB, k)$:
We have a set of agents $N$ and a set of candidates $C$. For each vertex $v \in V$,
we create $deg(G)-deg(v)$ `dummy' candidates in $C$, where $deg(G)$ is the maximum degree of
$G$, $deg(G) > 1$, and $deg(v)$ the degree of vertex $v$. For each $v \in V$, we also create
another candidate in $C$, labeled $C_v$. We create an agent in $N$ for each
edge $e \in E$. For each vertex we also create $deg(G)-deg(v)$ agents. Each of the edge
agents approves of the two candidates corresponding to the
vertices connected by the edge. Each vertex agent associated with vertex $v$
approves of $C_v$ and one of the dummy candidates associated with $v$, thus
each dummy candidate has exactly one agent approving of him. We also set $k = t$.

We will show that there is a committee of size $k = t$ scoring a total approval of at least $s=deg(G) \cdot t$
if and only if $G$ has an independent set of size $t$.
First, note that adding a candidate to a committee increases the total score of the committee by
at most $deg(G)$, since at most $deg(G)$ agents see their satisfaction score rise by at most one.
Also, if adding a candidate $c$ to a committee increases the total score of the committee by
exactly $deg(G)$, then $c$ corresponds to a vertex in $G$, since each dummy vertex is approved by only one agent,
and the vertex corresponding to $c$ is not adjacent to a vertex corresponding to any other candidate in
the committee.
Thus, the candidates in a committee of size $k = t$ scoring a total approval of $s$
correspond to an independent set of size $t$ in $G$ and vice-versa.
\end{proof}

The reduction in this proof actually implies a stronger result,
namely that, unless FPT=W[1], WD for $\pav$
cannot be solved in time $f(k)\cdot m^{O(1)}$, for any function $f$,
even if each agent approves of two candidates.
This is because it is a parameterized reduction where the parameter $k$ is a function
of the parameter $t$ for \textsc{Independent Set}, which is W[1]-hard for parameter $t$~\cite{DoFe13a}.
Thus, even for relatively small committee sizes, a factor $m^k$ in the running time seems unavoidable.

\begin{corollary}
 WD for $\pav$ is W[1]-hard.
\end{corollary}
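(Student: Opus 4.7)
The plan is to observe that the reduction constructed in the proof of Theorem~\ref{thm:pav-coNP} is in fact a parameterized reduction from \textsc{Independent Set} (parameterized by the solution size $t$) to $\pav$ \textsc{Winner Determination} (parameterized by the committee size $k$). The argument has three ingredients: the hardness of the source problem, the two defining conditions of a parameterized reduction, and the correctness already established in Theorem~\ref{thm:pav-coNP}.

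First I would recall that \textsc{Independent Set} parameterized by $t$ is a canonical W[1]-hard problem~\cite{DoFe13a}. Next I would verify that the construction in Theorem~\ref{thm:pav-coNP} qualifies as a parameterized reduction. Running-time: given $(G,t)$ the number of newly created candidates and agents is $O(|V|\cdot \deg(G)) + |E|$, and the approval ballots are written down in linear time, so the whole construction clearly runs in time polynomial in $|G|+t$. Parameter dependence: the reduction simply sets $k := t$, so the new parameter is bounded by a (trivial) computable function of the old one. Combined with the equivalence already proved---a size-$t$ independent set in $G$ exists iff some size-$k$ committee achieves $\pav$-score at least $\deg(G)\cdot t$---this transfers W[1]-hardness from \textsc{Independent Set} to $\pav$ WD parameterized by $k$, giving the corollary.

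I do not anticipate any real obstacle, since all the combinatorial work was already carried out inside Theorem~\ref{thm:pav-coNP}; the only thing to be careful about is to emphasize that $k$ is kept equal to $t$ (rather than, say, blown up polynomially in $|V|$), because this is precisely what makes the reduction parameter-preserving and hence rules out any $f(k)\cdot m^{O(1)}$ algorithm under the standard assumption $\mathrm{FPT}\neq \mathrm{W}[1]$.
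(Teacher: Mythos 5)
Your proposal is correct and follows exactly the paper's reasoning: the reduction of Theorem~\ref{thm:pav-coNP} sets $k:=t$, runs in polynomial time, and its correctness was already established, so it is a parameterized reduction from the W[1]-hard \textsc{Independent Set} problem. Nothing is missing.
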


\section{Strategic Voting}

As in the single winner case,
agents may benefit from mis-reporting
their true preferences when electing multiple winners.
We consider the special case of
dichotomous preferences where
each agent has utility 0 or 1 for
electing each particular candidate.
In this case, we say that
a multi-winner approval-based
voting rule is \emph{strategyproof} 
if and only if there does not exist an agent
who has an incentive to approve a candidate
with zero utility and does not have an incentive
to disapprove a candidate for whom the agent
has utility 1. We note that for \emph{dichotomous preferences}, $\av$ is strategyproof (if lexicographic tie-breaking is used).
However, it is polynomial-time manipulable for settings with more general utilities \cite{MPR08b}.
On the other hand, $\sav$, $\pav$ and $\rav$ are not.

\renewcommand{\theenumi}{(\roman{enumi})}

\begin{theorem}
 	$\sav$, $\pav$ and $\rav$ are not
strategyproof with dichotomous preferences.
\end{theorem}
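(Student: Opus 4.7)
The plan is to establish the theorem by exhibiting, for each of the three rules, a small concrete instance with dichotomous preferences in which some manipulator strictly gains utility (measured as the number of their approved candidates that end up in the winning committee) by misreporting. Because all three rules degrade the marginal value of extra approvals in different ways, the three counterexamples can be built on a single common intuition: an agent can concentrate their influence on a threatened candidate either by under-approving (dropping a truly-liked candidate) or by over-approving (including a disliked candidate strategically).

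For \sav, I would use a one-winner instance ($k=1$) over a handful of candidates in which the manipulator truthfully approves two candidates, both of which get squeezed to a per-candidate score of $1/2$ and hence lose to a rival candidate backed by a single-approval agent. Dropping one of the two approvals from the manipulator's ballot then raises the other to a full point, swinging the election in their favor. The key check is verifying that the manipulator's utility strictly improves; a tiebreaker favorable to the manipulator (consistent with our assumption about the tiebreaking input) makes this clean. For \pav, I would build a two-winner instance where, under truthful voting, an agent who approves two strong candidates sees both elected but with a diminished marginal contribution of $1+1/2$; structuring the other ballots so that truthful voting actually yields the manipulator only one approved winner, while dropping one approval lets a second approved candidate slip in by undercutting the harmonic discount. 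For \rav, I would exploit the sequential reweighting: an agent who approves two candidates, where one is destined to win the first round, sees their weight cut to $1/2$ for the second round; by withholding approval from the sure winner, the manipulator retains full weight $1$ in the second round and thereby pushes a second approved candidate into the committee.

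The main obstacle will be arranging the non-manipulator ballots so that in each example (a) truthful reporting gives the manipulator strictly less utility than the manipulated report, and (b) the manipulation uses only the form permitted by the strategyproofness definition — either approving an unliked candidate or disapproving a liked one — rather than producing a ballot inconsistent with dichotomous preferences. The \rav\ example in particular requires care because of the round-by-round dynamics, and the \pav\ example requires checking several candidate committees to confirm that the manipulated committee is truly optimal. Beyond this case analysis, no additional techniques are needed: each example is finite, small, and verifiable by direct score computation using the formulas given in the Formal Background.
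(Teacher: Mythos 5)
Your proposal takes essentially the same route as the paper: one small explicit profile per rule in which the manipulator gains by dropping a truly approved candidate, exploiting \sav's weight-splitting across approvals, \pav's harmonic discount, and \rav's sequential reweighting, respectively. The mechanisms you identify are exactly the right ones and each instantiates over three candidates with a committee of size one or two (the paper uses $k=2$ and $C=\{a,b,c\}$ for all three), so the only remaining work is writing down the concrete ballots and verifying the scores --- which, as you note for \pav, does require checking all candidate committees.
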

\begin{proof}
	We treat each case separately. We assume that ties are always broken
	lexicographically with $a \succ b \succ c$, e.g., $\{a,b\}$ is preferred to $\{a,c\}$.
	\begin{enumerate}
		
\item For $\sav$, assume $k=2$, $C = \{a, b, c\}$, and agent $1$ has non-zero utility only for $a$ and $b$. 	 Let,
\[	A_2=\{a\}, A_3=	\{a\}, A_4=\{a\}, A_5=\{c\}, A_6=\{b,c\}.\]
The outcome is $\{a,c\}$ if $A_1=\{a,b\}$, but if agent $1$ only approves $b$, the outcome is $\{a,b\}$ which has the maximum utility and is preferred by tie-breaking.


\item For $\pav$, consider the same setting but now with the following votes:
\[A_2=\{b\}, A_3=\{a,c\}, A_4=\{a,c\}, A_5=\{c\}.\]
      The outcome $\{a,b\}$ is only possible if agent 1 approves only $b$. Otherwise it is $\{a,c\}$.

\item 	For $\rav$, consider the same setting but now with the following votes:
\[A_2=\{a\}, A_3=\{a\}, A_4=\{a\}, A_5=\{c\}, A_6=\{b,c\}.\]
	 	The outcome is $\{a,c\}$ for all reported preferences of agent $1$
		 $A_1=\{b\}$, in which case the outcome is $W = \{a,b\}$.
\end{enumerate}
This completes the proof.
\end{proof}

%
%

With \sav, \pav and \rav, it can therefore be beneficial for agents to
vote strategically.
Next, we consider the computational complexity of computing such strategic votes.

%


\vspace{.2cm}
\noindent
\textbf{Name:} \textsc{Winner Manipulation (WM)} \\
\noindent
\textbf{Input:}
An approval-based voting rule $R$,
a set of approval ballots $\AB$ over the
set $C$ of candidates, a winning set size $k$, a number of agents
$j$ still to vote, and a preferred candidate~$p$.\\
\noindent
{\bf Question:} Are there
$j$ additional approval ballots
so that $p$ is in the winning set $W$ under $R$?

\vspace{.2cm}
\noindent
{\bf Name:} {\sc Winning Set Manipulation (WSM)}. \\
\noindent
{\bf Input:}
An approval-based voting rule $R$,
a set of approval ballots $\AB$ over the
set $C$ of candidates, a winning set size $k$, a number of agents
$j$ still to vote, and a set of preferred candidates $P \subseteq C$.\\
\noindent
{\bf Question:} Are there
$j$ additional approval ballots
such that $P$ is the winning set of candidates
under $R$?
\vspace{.2cm}

We note if \textsc{WM} or {\sc WSM} is NP-hard for a single agent ($j=1$),
then the more general problem of maximizing the utility of an agent is also NP-hard.
For $\av$, the utility maximizing best response of a single agent can be computed in polynomial time~\cite{MPRZ08a}.
We note our definitions have additive utilities, and the question is to 
cast $j$ votes so as to maximize the \emph{total} utility. 
This is more general than WM/WSM, 
since a simple reduction from gives utility 1 to the candidates in 
P (or \{p\}), and 0 to all the other candidates.

%
%
%

\subsection{Satisfaction Approval Voting ($\boldsymbol{\sav}$)}
	
\textsc{WM} under \sav is polynomial-time solvable.
The agents cast an approval ballot for just the preferred candidate.
This is the best that they can do. If the preferred candidate does not win
in this situation, then the preferred candidate can never win. It follows that we can also
construct the set of candidates that can possibly win in polynomial time.
It is more difficult to decide if a given $k$-set of candidates can
possibly win. With certain  voting rules, this problem simplifies if the optimal strategy of $j$ manipulating agents need to cast only one form of vote.
This is not the case with $\sav$.
	
\begin{theorem}
    To ensure a given set of candidates is selected under $\sav$, the manipulating coalition may need to cast a set of votes that are not all identical.
\end{theorem}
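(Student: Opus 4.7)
The plan is to exhibit a concrete \sav instance together with a target committee $P$ and a number $j$ of manipulators, such that (i) some non-identical choice of the $j$ ballots makes $P$ the winning set, but (ii) no identical choice does. The key observation is that if all $j$ manipulators submit the same ballot $S$, then each candidate in $S$ receives a uniform boost of $j/|S|$ to its \sav score, while candidates outside $S$ receive none; hence an identical strategy can never apply unequal boosts to two candidates in $P$. I would exploit this by engineering the honest profile so that two candidates of $P$ need substantially different amounts of extra score to reach the top $k$.

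Concretely, I would take $C = \{a,b,c,d,e\}$, $k=3$, target $P = \{a,b,c\}$, and $j=5$, with lexicographic tie-breaking $a \succ b \succ c \succ d \succ e$ (which favours $P$). The honest voters cast only singleton ballots, tuned so that the \sav scores are $a=2$, $b=3$, $c=10$, $d=4$, $e=5$; without manipulation the winning set is $\{c,d,e\}$. A non-identical manipulation that works is immediate: three manipulators vote $\{a\}$ and two vote $\{b\}$, yielding $a = b = 5$, so tie-breaking within the three-way tie at score $5$ places $a$ and $b$ (together with the safe candidate $c$) into the top three.

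The main step is the case analysis ruling out every identical profile $S$. Since $e$ has score at least $5$ throughout and tie-breaking favours $\{a,b\}$ over $e$, both $a$ and $b$ must end with score at least $5$ for $P$ to win. If $a \in S$, then $a$ ends at $2 + 5/|S|$, forcing $|S| = 1$ and thus $S = \{a\}$; but then $b$ receives no boost and fails. If $a \notin S$, then $a$ stays at $2$ and fails. Adding $e$ to $S$ only makes matters worse, since $e$ would then receive at least as large a boost as $a$. An analogous argument handles $b$. Thus no identical strategy achieves $P$, while the asymmetric three-plus-two strategy does, which is what was to be shown. The only real obstacle is getting the case analysis tight; the arithmetic is short once one exploits the uniform-boost observation and the fact that tie-breaking puts the pressure on $a$ and $b$ to reach precisely the fixed threshold set by $e$.
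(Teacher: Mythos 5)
Your proposal is correct and follows essentially the same approach as the paper: exhibit an explicit \sav profile, a target set $P$, and a coalition for which some non-identical ballot profile elects $P$ but no identical one does, using the observation that identical ballots give every approved candidate the same boost $j/|S|$. The paper's own example is more economical (two manipulators, no reliance on tie-breaking), but your instance and the accompanying case analysis are sound.
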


\begin{proof}
Suppose $k=3$ and $C=\{a, b, c, d, e, f, g\}$, one agent approves both $a$ and $b$,
and three agents approve $d$, $e$, $f$ and $g$. If there are two more agents
who want $a$, $b$ and $c$ to be elected, then one agent needs to approve
$c$ and the other both $a$ and $b$, or one agent needs to
approve $a$ and $b$, and the other $a$ and $c$.
\end{proof}

This makes it difficult to decide how a coalition of agents must vote.
In fact, it is intractable in general to decide if a given
set of candidates can be made winners. We omit the proof 
for space but observe that it is a reduction to the 
permutation sum problem as in the NP-hardness 
proof for Borda manipulation with two agents \cite{DKNW11a}.

\begin{theorem}\label{thm:wsm-nphard-sav}
\textsc{WSM} is NP-hard for $\sav$.
\end{theorem}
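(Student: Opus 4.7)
The plan is to reduce from the NP-hard \textsc{Permutation Sum} problem, paralleling the strategy used by~\cite{DKNW11a} for Borda manipulation with two manipulators. An instance of \textsc{Permutation Sum} consists of positive integers $T_1, \ldots, T_n$ with $\sum_i T_i = n(n+1)$, and asks whether there exist permutations $\sigma, \pi$ of $\{1, \ldots, n\}$ satisfying $\sigma(i) + \pi(i) = T_i$ for every $i$.

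From such an instance I would build a \textsc{WSM} instance with $j = 2n$ manipulators and prescribed winning set $P = \{p_1, \ldots, p_n\}$ of size $k = n$. The key arithmetic trick, needed because $\sav$ contributes $1/s$ rather than $s$ per approved candidate, is to scale all scores by $L = \mathrm{lcm}(1, 2, \ldots, n)$, so that a ballot of size $L/m$ contributes exactly $m$ scaled units to each approved candidate. I would partition the $2n$ manipulators into two groups of $n$ indexed by $m \in \{1, \ldots, n\}$, and force the $m$-th manipulator of each group to cast a ballot of size exactly $L/m$ containing exactly one target in $P$, with the remaining $L/m - 1$ slots filled by dedicated filler dummies. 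The only degree of freedom is then which target each manipulator approves; the choices within a group encode a permutation of $\{1, \ldots, n\}$, and target $p_i$ receives total scaled contribution $\sigma^{-1}(i) + \pi^{-1}(i)$ from the two groups combined.

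I would set the existing ballots so that $p_i$ currently sits exactly $T_i$ scaled units below a common threshold $\tau$, while every candidate outside $P$ stays strictly below $\tau$ under every admissible manipulation. Then $P$ is the $\sav$ winning set exactly when $\sigma^{-1}(i) + \pi^{-1}(i) = T_i$ for all $i$, which (after replacing $\sigma, \pi$ by their inverses) is precisely the original \textsc{Permutation Sum} instance.

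The main obstacle is the forcing gadget. Unlike Borda, where each vote is automatically a permutation of score weights, an $\sav$ ballot has a freely chosen size and composition, so nothing a priori pins each manipulator down to a size-$L/m$ single-target ballot. I would therefore introduce dedicated filler candidates for each required ballot size $L/m$ and tune the frequencies of the existing voters so that any deviation --- a ballot of the wrong size, a ballot approving two targets, or a ballot omitting some required filler --- either elevates a filler past $\tau$ (ejecting some $p_i$ from the winning set) or leaves some $p_i$ stranded below $\tau$. Making these threshold inequalities hold simultaneously across every deviation and every size class, while keeping non-$P$ non-filler candidates safely below $\tau$, is the delicate part of the proof and is where most of the case analysis would live.
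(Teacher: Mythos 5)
Your high-level strategy---a reduction from \textsc{Permutation Sum} modeled on the Borda two-manipulator proof---is exactly the route the paper indicates (it omits the proof but states that it is ``a reduction to the permutation sum problem as in the NP-hardness proof for Borda manipulation with two agents''). However, beyond the forcing gadget that you honestly flag as unconstructed, there is a genuine correctness gap in the backward direction of your reduction. In the Borda argument a single manipulator's ballot is a full ranking and therefore hands the $n$ targets $n$ \emph{distinct} score values, so two manipulators automatically give target $p_i$ exactly $\sigma(i)+\pi(i)$ for two permutations $\sigma,\pi$. An $\sav$ ballot instead gives the \emph{same} value $1/|A|$ to everything it approves, so in your construction each manipulator deposits a single number $m$ on a single target. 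Even if your gadget perfectly forces the multiset of ballot sizes to be $\{L/1,L/1,\ldots,L/n,L/n\}$ with one target per ballot, what the targets receive is an arbitrary partition of the multiset $\{1,1,2,2,\ldots,n,n\}$ into $n$ parts with sums $T_1,\ldots,T_n$; the parts need not have size two, so a feasible manipulation need not come from a pair of permutations. Concretely, for $n=3$ and $T=(1,2,9)$ we have $\sum_i T_i=n(n+1)=12$ and \textsc{Permutation Sum} is a NO instance (no permutations satisfy $\sigma(1)+\pi(1)=1$), yet your \textsc{WSM} instance is a YES: send one size-$L$ ballot to $p_1$, one size-$L/2$ ballot to $p_2$, and the remaining four ballots to $p_3$, giving scaled contributions $1$, $2$, and $1+2+3+3=9$. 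Since $\sav$ only sees aggregate scores, no gadget can count how many ballots touched a target, so NO instances are not preserved.

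The gap is repairable with a standard offset trick: use ballot sizes $L/(M+m)$ for $m=1,\ldots,n$ with $M>2n$ and set each target's deficit to $2M+T_i$. Then one contribution is too small ($M+m<2M+T_i$) and three are too large ($3M+3>2M+T_i$), so every target must receive exactly two contributions; and any pairing of the multiset $\{M+1,M+1,\ldots,M+n,M+n\}$ with the prescribed pair-sums can be oriented into two permutations (the pairs form a $2$-regular multigraph on the values, whose cycles admit an Eulerian orientation), so the correspondence with \textsc{Permutation Sum} is restored. You would still need to actually build the forcing gadget---dedicated filler candidates per $(\text{group},m)$ slot with thresholds ruling out wrong ballot sizes, multiple targets per ballot, and omitted fillers---which is where the bulk of the work lies and which your sketch only gestures at. As written, the argument is not yet a valid reduction.
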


The proof requires both the number of agents and
the size of the winning set to grow. An open question
is the computational complexity when we bound
either or both the number of agents and the
size of the winning set.
We can also show that
it is intractable to manipulate
$\sav$ destructively. 

We can adapt the proof for Theorem~\ref{thm:wsm-nphard-sav} to show the following
statement as well.

\begin{theorem}\label{thm:sav-destructive}
	For $\sav$, it is not possible for a single manipulator to compute in polynomial time a vote
	that maximizes his utility, unless P=NP.
\end{theorem}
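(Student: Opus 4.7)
My plan is to adapt the WSM reduction of Theorem~\ref{thm:wsm-nphard-sav}, which produces in polynomial time a $\sav$ instance together with a target committee $P$ of size $k$ such that deciding whether a coalition of $j$ manipulators can force $P$ to be the winning set is NP-hard. To obtain a statement about a single utility-maximizing manipulator, I would equip the manipulator with dichotomous utilities $u(c)=1$ for every $c\in P$ and $u(c)=0$ otherwise. Then the total utility $\sum_{c\in W} u(c)$ is bounded above by $|P|=k$, with equality exactly when the winning committee equals $P$. Consequently, any polynomial-time algorithm that returns a utility-maximizing ballot would decide the associated WSM question simply by checking whether the achieved utility equals $k$.

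The remaining task is to push $j$ down to one. I would do this by ``pre-casting'' $j-1$ coalition ballots as additional fixed ballots in $\AB$ and adjusting the score thresholds of the gadget accordingly, leaving the combinatorial difficulty concentrated in the last ballot. The mechanism that makes this plausible is the $\sav$ property that a ballot of size $\ell$ distributes weight $1/\ell$ uniformly across its approved candidates, so by tuning $\ell$ and the approval set, a single ballot retains enough freedom to encode the residual degree of freedom needed by the permutation-sum style reduction. If this direct collapse fails to preserve hardness, my backup plan is a fresh reduction from an NP-hard numerical problem such as \textsc{Subset Sum}, using the same weight-sharing property to map input integers to the ballot-size-versus-selection trade-off faced by the lone manipulator.

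The step I expect to be the main obstacle is exactly this collapse from $j$ to $1$: absorbing coalition ballots into $\AB$ can inadvertently trivialize the residual choice if not done carefully, and the score thresholds must be chosen so that no ballot other than one encoding a genuine solution attains the optimal utility. Once this is arranged, the rest of the argument is routine: verify that the reduction runs in polynomial time and is independent of tie-breaking (since the argument turns only on whether a specific utility value is reached), and conclude that polynomial-time optimal manipulation under $\sav$ would place an NP-hard problem in P.
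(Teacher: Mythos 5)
Your plan cannot work as stated, and the obstacle you flag as "the main obstacle" is not merely technical: it is ruled out by the paper's own positive results. If you give the lone manipulator utility $1$ on each member of a $k$-set $P$ and $0$ elsewhere, then a utility-maximizing ballot attains utility $k$ exactly when a single agent can force $P$ to be (equivalently, to be contained in) the winning set. But the paper shows (Theorem~\ref{thm:sav-two agents} and the surrounding discussion) that \textsc{WSM} for $\sav$ is polynomial-time solvable when only one or two agents remain to vote, and that the one-agent argument extends to deciding whether $P$ can be made a subset of the winning set. So the decision problem your reduction would solve is already in P, and no contradiction with $\mathrm{P}\neq\mathrm{NP}$ follows. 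Relatedly, the proposed "collapse from $j$ to $1$" is not a reduction at all: the $j-1$ ballots you would pre-cast are precisely the unknowns of the NP-hard \textsc{WSM} instance, so you cannot fix them without first solving that instance, and fixing them arbitrarily changes the answer to the residual question.

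The hardness in this theorem therefore has to come from the utilities, not from a target committee: the manipulator must have positive (and, in general, unequal) utility on more candidates than can simultaneously be elected, so that maximizing $\sum_{c\in W}u(c)$ forces a genuine optimization over which achievable committee to steer toward --- a trade-off mediated by the fact that enlarging the single ballot dilutes its per-candidate weight $1/|A_1|$. That is why the paper obtains the result by adapting the permutation-sum-style construction behind Theorem~\ref{thm:wsm-nphard-sav} to the utility-maximization (and destructive) setting, rather than by shrinking the coalition in a dichotomous-utility \textsc{WSM} instance. Your backup idea --- a direct reduction from a numerical problem such as \textsc{Subset Sum} exploiting the weight-sharing of $\sav$ ballots --- is much closer in spirit to what is actually required, but as written it is only a contingency sketch, not an argument.
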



Hence, in the case of multi-winner
voting rules, destructive manipulation
can be computationally harder than
constructive manipulation. This contrasts to
the single winner case where destructive
manipulation is often easier than constructive
manipulation \cite{CSL07a}.
It also follows from Theorem~\ref{thm:sav-destructive} that
it is intractable to manipulate
$\sav$
to ensure a given utility or greater.


We next turn to the special cases of a single agent and a pair of agents. Winning set
manipulation is polynomial with either one or two agents left to vote.
This result holds even if the size
of the winning set is not bounded (e.g. 
$k=m/2$).  The proofs are one agent is omitted for space, however
we observe that the proof of the following Theorem can be extended for 
the case where a set $P$ has to be a subset of the winning set.

\begin{theorem} \label{thm:sav-two agents}
If two agents remain to vote,
\textsc{WSM} is polynomial for $\sav$.
\end{theorem}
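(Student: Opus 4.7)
The plan is to enumerate a polynomial number of structural configurations and check each by a greedy feasibility test. The key observation is that an approval ballot $A$ of size $s$ adds exactly $1/s$ to the $\sav$-score of every candidate in $A$. Hence a pair of manipulator ballots $A_1, A_2$ is characterized, up to the specific candidates chosen, by their sizes $s_1, s_2$ and by a distribution of candidates into four cells (``in neither,'' ``in $A_1$ only,'' ``in $A_2$ only,'' ``in both'') with bonuses $0$, $b_1 := 1/s_1$, $b_2 := 1/s_2$, and $b_1 + b_2$ respectively.

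First I would enumerate $(s_1, s_2) \in \{0, 1, \dots, m\}^2$. For each size pair, I would enumerate the six integers $(n_{10}, n_{01}, n_{11}, m_{10}, m_{01}, m_{11})$ giving the number of $P$-candidates and of non-$P$-candidates placed in each non-trivial cell. Non-negativity of the slack counts $n_{00} := k - n_{10} - n_{01} - n_{11}$ and $m_{00} := (m-k) - m_{10} - m_{01} - m_{11}$, together with the ballot-size identities $n_{10}+n_{11}+m_{10}+m_{11} = s_1$ and $n_{01}+n_{11}+m_{01}+m_{11} = s_2$, cut the free dimensions down to four and yield at most $O(m^4)$ valid configurations per size pair.

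For a fixed configuration, the choices within $P$ and within $C \setminus P$ decouple, because the per-side cell sizes are already fixed. Inside $P$ the goal is to maximize $\min_{p \in P}$ of the post-bonus score; inside $C \setminus P$ the goal is to minimize $\max_{q \notin P}$ of the post-bonus score. By the bottleneck form of the rearrangement inequality, both optima are attained greedily: sort the $P$-candidates in ascending order of current $\sav$-score and assign the prescribed multiset of bonuses in descending order (most bonus to the weakest candidate), and symmetrically sort non-$P$-candidates in descending order and assign bonuses ascending. The configuration is feasible iff the resulting minimum $P$-score beats the resulting maximum non-$P$-score under the input lexicographic tie-breaking, and if any assignment is feasible the greedy one is.

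The delicate step will be handling tie-breaking correctly: since ties are broken by a fixed linear order on $C$, the two greedy sorts must use the composite key (current $\sav$-score, tie-break priority), or equivalently perturb each score by $\epsilon^{\mathrm{rank}(c)}$ for sufficiently small $\epsilon$, so that the induced min/max honor the tie-breaking rule. Return \emph{yes} iff some size pair and some configuration pass the test. Since the outer enumeration has $O(m^6)$ cases and each test runs in $O(m \log m)$ time, the procedure runs in polynomial time, establishing the theorem.
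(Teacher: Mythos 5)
Your proof is correct, and since the paper omits its own proof of this theorem (it is one of the proofs ``omitted for space''), there is nothing to compare against; your argument stands as a complete, self-contained justification. The essential ingredients all check out: because $Sat(W)=\sum_{c\in W}\sum_{i:c\in A_i}1/|A_i|$ is additive over candidates, the winning set is the top $k$ candidates under the composite key (score, tie-break priority), so ``$P$ wins'' reduces to the bottleneck condition that the composite-minimum over $P$ beats the composite-maximum over $C\setminus P$; the pair of manipulator ballots is indeed determined, up to which specific candidates occupy each cell, by the sizes $s_1,s_2$ and the six cell counts, giving a polynomial enumeration that in particular accommodates the non-identical ballots required by the paper's Theorem~4; and the two-element exchange argument for the anti-monotone (greedy) assignment goes through verbatim for the lexicographic composite order, since adding a common bonus to the score component preserves that order. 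Two small points worth making explicit if you write this up: handle $s_i=0$ (the empty ballot) as contributing bonus $0$ to everyone rather than $1/0$, and note that for additive scores the greedy top-$k$ under the composite candidate order coincides with the lexicographically preferred set among all score-maximizing $k$-sets, which is what the paper's set-level tie-breaking convention selects.
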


\subsection{Proportional Approval Voting ($\boldsymbol{\pav}$)}

The proof of the NP-hardness of \textsc{Winner Determination} for \pav can be adapted
to also show that basic manipulation problems are coNP-hard for $\pav$.


\begin{theorem}\label{thm:wm-wsm-conp}
 For $\pav$, \textsc{WM} and \textsc{WSM} are coNP-hard, even if there is no manipulator.
\end{theorem}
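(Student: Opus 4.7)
The plan is to reduce from the complement of \textsc{Independent Set}. Given an instance $(G,t)$ of \textsc{Independent Set}, I would build a $\pav$ instance with a distinguished candidate $p$ (and a distinguished set $P$ for the WSM variant) such that $p \in W$ (respectively $P=W$) if and only if $G$ has \emph{no} independent set of size $t$, yielding coNP-hardness even when no manipulator is present.

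First I would pad $G$ to $G'$ by adding $t-1$ fresh vertices $w_1,\ldots,w_{t-1}$, each adjacent to every vertex of $G$ but pairwise non-adjacent, so that $G'$ always contains the independent set $\{w_1,\ldots,w_{t-1}\}$, while $G'$ has an independent set of size $t$ iff $G$ does. Then I would run the gadget of Theorem~\ref{thm:pav-coNP} on $(G',t)$ and append three ingredients: a ``super-popular'' candidate $c$ approved by $M$ single-approval agents for a large enough $M$ to force $c$ into every optimal committee; the distinguished candidate $p$, approved by $\deg(G')-1$ single-approval agents plus one extra agent with ballot $\{p,c\}$, so that conditional on $c \in W$ the marginal $\pav$ contribution of $p$ is exactly $\deg(G')-\tfrac{1}{2}$; and committee size $k=t+1$, with tie-breaking giving $p$ top priority and, in the WSM case, placing the $C_{w_i}$ just above every other $C_v$.

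The analysis then mirrors the one from Theorem~\ref{thm:pav-coNP}. If $G$ has an independent set $S$ of size $t$, the committee $\{c\}\cup\{C_v : v\in S\}$ scores $M+t\deg(G')$, strictly exceeding the score $M+t\deg(G')-\tfrac{1}{2}$ of any committee of the form $\{c,p\}\cup W_{t-1}$; hence $p\notin W$ and $P\neq W$. If $G$ has no independent set of size $t$, the bookkeeping of Theorem~\ref{thm:pav-coNP} shows that any $t$-element all-old committee loses at least $\tfrac{1}{2}$ to an intra-set edge, so its score is at most $M+t\deg(G')-\tfrac{1}{2}$, matching the score of $\{c,p\}\cup\{C_{w_1},\ldots,C_{w_{t-1}}\}$; tie-breaking then puts $p$ in $W$ and, via the secondary priority on the $w_i$, selects exactly $P$.

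The main obstacle is ensuring that no ``mixed'' competing committee --- one using dummies, including $p$ but omitting $c$, or mixing the padded vertices with original ones --- sneaks above the threshold $M+t\deg(G')-\tfrac{1}{2}$. Closing this requires extending the score accounting of Theorem~\ref{thm:pav-coNP} to show that each dummy contributes at most $1$ (strictly less than a $C_v$'s $\deg(G')\geq 2$ once padding has raised the minimum degree), and that choosing $M$ larger than every other possible contribution forces $c\in W$. Once these uniform bounds are in place, the optimal score is exactly $M+t\deg(G')$ in the IS case and exactly $M+t\deg(G')-\tfrac{1}{2}$ otherwise, so membership of $p$ (or $P$) in $W$ flips precisely with the (non)existence of an independent set of size $t$ in $G$.
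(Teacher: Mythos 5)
Your proposal is correct and follows exactly the route the paper indicates (the paper omits the details, saying only that the Theorem~\ref{thm:pav-coNP} reduction can be adapted): you reuse the Independent Set gadget, force a popular candidate $c$ into every optimal committee, and calibrate $p$'s marginal contribution to $deg(G')-\nicefrac{1}{2}$ so that, with favorable tie-breaking, $p$ enters $W$ precisely when no size-$t$ independent set exists. The padding with $w_1,\ldots,w_{t-1}$ to guarantee a tying $p$-committee in the ``no'' case, and the uniform $\nicefrac{1}{2}$/dummy loss bounds you cite, correctly close the gaps you flag, so the argument goes through.
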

%

In Theorem~\ref{thm:wm-wsm-conp} the hardness of WM and WSM really comes from the hardness of WD, demonstrated by requiring no manipulators.
This result motivates us to investigate the situation where a ``real'' manipulation is necessary, that is, 
whether a single manipulator can include a particular candidate in the winning set, even 
if WD is polynomial-time computable for the underlying $\pav$ instance.  While we conjecture this is hard,
we can formally prove the following, slightly weaker, statement.

\begin{theorem}
For $\pav$, it is not possible for a single manipulator to compute in polynomial time a vote
that maximizes his utility, unless $P=NP$.
\end{theorem}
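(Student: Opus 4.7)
I plan to reduce from \textsc{Independent Set} by extending the construction used in the proof of Theorem~\ref{thm:pav-coNP}. Given an instance $(G,t)$, I keep the base $\pav$ election $(N,C,\AB,k=t)$ in which any committee of size $t$ attains $\pav$ score exactly $deg(G)\cdot t$ iff it corresponds to an independent set of size $t$ in $G$, and a strictly smaller score otherwise. I introduce a single manipulator $\ast$ with additive dichotomous utilities, assigning utility $1$ to each vertex candidate $C_v$ and $0$ to each dummy candidate. The manipulator's utility for a winning committee $W$ is then $|W\cap\{C_v:v\in V\}|\le t$, with equality iff $W$ corresponds to an independent set of size $t$.

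The aim is to show that the manipulator's maximum achievable utility is exactly $t$ iff $G$ has an independent set of size $t$. The YES direction is immediate: even the empty ballot leaves the $\pav$ winner as the independent set, giving utility $t$. For the NO direction, I argue that no manipulator ballot can push a size-$t$ committee consisting entirely of vertex candidates to become the $\pav$ winner: the additive $\pav$ contribution of a single ballot is at most $r(t)=O(\log t)$, which must be made strictly smaller than the gap between the optimal and the next-best base scores. To make this dominance robust, I replicate each original agent $M$ times for a polynomially bounded $M > r(t)$; all base scores then grow by a factor $M$ while $\ast$'s contribution stays $O(\log t)$, so the ranking of committees is preserved and $\ast$'s ballot cannot tip the winner.

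The main obstacle is bridging from the decision problem (``is the optimal utility $\geq t$?'') to the function problem of outputting an optimal ballot: a polynomial-time algorithm that returns a ballot does not obviously let us certify the corresponding utility, because verifying this in general requires solving $\pav$ winner determination, itself NP-hard by Theorem~\ref{thm:pav-coNP}. I plan to sidestep this by appending a probe candidate $c^\ast$ together with a small gadget of auxiliary agents, designed so that approving $c^\ast$ is uniquely optimal for $\ast$ iff $G$ has an independent set of size $t$. Reading off whether $c^\ast$ appears in the returned ballot then decides \textsc{Independent Set} in polynomial time, yielding $P=NP$. Designing the probe gadget so that it cleanly distinguishes YES and NO instances at the level of the output ballot, without re-introducing a $\pav$ winner-determination call, is the step I expect to be the most delicate.
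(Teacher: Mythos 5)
There is a genuine gap at the core of the reduction, not just in the bridging step you flag. With additive utility $1$ on every vertex candidate $C_v$ and $0$ on every dummy, the manipulator's utility from a committee $W$ is simply the number of vertex candidates in $W$, and this equals $t$ whenever the winner contains no dummies --- it does \emph{not} require $W$ to correspond to an independent set. But in a NO instance the $\pav$-optimal committee can still be taken to consist entirely of vertex candidates: every agent in the Theorem~\ref{thm:pav-coNP} construction approves exactly two candidates, so removing a vertex candidate $C_v$ from a committee costs each of the $deg(G)$ agents approving it at least $1/2$, a total loss of at least $deg(G)/2\ge 1$, while a dummy is approved by a single agent and hence contributes at most $1$; swapping a vertex candidate for a dummy therefore never strictly increases the $\pav$ score, and symmetrically any dummy in an optimal committee can be swapped out for an unused vertex candidate without decreasing the score. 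In the extreme case of a regular graph there are no dummy candidates at all (take $G=K_3$, $t=2$: there is no independent set of size $2$, yet every size-$2$ committee consists of vertex candidates and yields utility $2$). So the maximum achievable utility is $t$ in both YES and NO instances and your claimed equivalence fails. The hardness of WD for $\pav$ lies in deciding \emph{which} vertex candidates to take, and your utility function is indifferent among all size-$t$ sets of vertex candidates, so it cannot encode \textsc{Independent Set}. Relatedly, once you replicate every agent $M>2r(t)$ times, the manipulator's ballot can no longer change which committees are base-score-optimal, only break ties among them, so the construction gives the manipulator essentially nothing to decide.

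The second gap is the one you acknowledge but do not close: a hypothetical polynomial-time best-response algorithm outputs a \emph{ballot}, and evaluating that ballot's utility amounts to $\pav$ winner determination, which is NP-hard by Theorem~\ref{thm:pav-coNP}; so the reduction must read the answer off the ballot itself. Your probe candidate $c^\ast$ is only a placeholder --- without a concrete gadget together with a proof that $c^\ast$ is approved in every (or some canonical) optimal ballot exactly in the YES case, the argument does not go through. A workable proof needs an instance in which the manipulator's single ballot genuinely determines whether a high-utility outcome occurs, so that any optimal ballot encodes a solution to the NP-hard problem, rather than an instance whose outcome is already fixed before the manipulator votes.
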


\subsection{Reweighted Approval Voting ($\boldsymbol{\rav}$)}
In \rav the decision for a single agent of whom to vote for in order to maximize his
utility is not straightforward.  Suppose we are selecting a committee of size $k=2$ with
$C=\{a,b,c,d\}$:
\begin{align*}
&A_2=\{b,d\},  A_3= \{c,d\}, A_4= \{a,b,c,d\}\\
&A_5=A_6=\{b,c,d\},
A_7= \{a,b\},
A_8= \{c\},
A_9= \{a\}.
\end{align*}

	

If the agent wants to elect $a$ to the committee then he may need to express preference for \textit{more} than just his choice set.
In the above example, if agent $1$, casts the ballot $A_1=\{a\}$ then in Round 1
$b$ is elected, in Round 2 $c$ is elected.
However, if the agent casts the ballot $\{a, d\}$ then in Round 1 $d$ is elected, 
and in Round 2 $a$ is elected.




\begin{theorem}\label{single:over}
    Under $\rav$, an agent who wants to include a single candidate in the committee may have incentive to approve more candidates than $P$.
\end{theorem}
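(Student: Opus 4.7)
The plan is to prove this existentially by verifying that the example already set up just before the theorem statement does the job, so the proof reduces to a careful computation of two rounds of $\rav$ under two different ballots for agent 1. Let $P = \{a\}$ be the single candidate agent 1 wants elected.

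First I would handle the ballot $A_1 = \{a\}$, i.e., the ``honest'' restriction to $P$. I would tabulate the round 1 weighted approval scores of each of $a, b, c, d$ (all weights are $1$), observe that $a$ scores strictly less than $b, c, d$ (which will be tied), apply the lexicographic tie-breaker to declare one of $b, c, d$ the first winner, then reweight the agents who approved that winner to $1/2$ and recompute round 2 scores. The point I need to reach is that $a$ loses round 2 as well, so $a \notin W$ when agent 1 votes $\{a\}$.

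Next I would redo the same computation with $A_1 = \{a, d\}$. Adding $d$ to agent 1's ballot boosts $d$'s round 1 score by one, which (as the numbers are arranged) is enough to make $d$ win round 1 outright. Then in round 2, all of $a$'s scoring agents except agent 1 are untouched (agent 1 is reweighted to $1/2$), while $b$ and $c$ are now heavily down-weighted because most of their supporters also approved $d$. I would show the round 2 scores come out so that $a$ wins by tie-breaking, and hence $a \in W$.

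The only real obstacle is arithmetic bookkeeping and making sure the tie-breaking convention is used consistently with the rest of the paper (lexicographic, $a \succ b \succ c \succ d$, favoring the preferred candidate), so that the round 1 tie among $b, c, d$ in the honest case does not accidentally pick $a$, and the round 2 three-way tie in the manipulated case does pick $a$. Once the two tables of weighted scores are exhibited, comparing the outcomes $W = \{b, c\}$ versus $W = \{a, d\}$ immediately witnesses the claim that approving a proper superset of $P$ can be strictly beneficial to agent 1, completing the proof.
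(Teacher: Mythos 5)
Your proposal is correct and matches the paper's own justification, which is exactly the worked example preceding the theorem: with $A_1=\{a\}$ the tie among $b,c,d$ at score $5$ is broken for $b$, then $c$ wins round 2, whereas with $A_1=\{a,d\}$ candidate $d$ wins round 1 outright and the resulting three-way tie at score $3$ in round 2 is broken for $a$. One tiny slip: agent $4$ (who approves all of $a,b,c,d$) is also reweighted to $1/2$ after $d$'s election, not just agent $1$, but this does not affect your (correct) conclusion that $a$ ties at $3$ and wins by tie-breaking.
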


Furthermore, if the agent is attempting to fill a committee with a preferred set of candidates,
he may have incentive not to approve some candidates so that they may be elected.
Suppose we are selecting a committee of size $k=3$ with
$C=\{a,b,c,d\}$, using lexicographic tie-breaking:
\begin{align*}
&A_2=\{b,d\},  A_3= \{c,d\}, A_4=A_5=A_6= \{b,c,d\}\\
&A_7= \{b\},
A_8= \{c\},
A_9=A_{10}= \{a\}.
\end{align*}
If the
agent has favored set $\{a, b, d\}$ and he approves all of them, then in Round
1 $b$ is elected, in Round 2 $c$ is elected, and in Round 3 $a$ is elected.
%
However, if the agent casts the ballot $\{a, d\}$ then in Round 1 $d$ is elected,
in Round 2 $a$ is elected, and in Round 3 $b$ is elected, exactly the favored set.
%


If a manipulator wants to elect exactly a favored set $P$ then he must approve
either $P$, or a subset of it.  
%

\begin{theorem}
    Under $\rav$, an agent who wants to elect an exact set of candidates will never have an incentive to approve
     a superset of his preferred candidates, though he may have an incentive to approve a subset of them.
\end{theorem}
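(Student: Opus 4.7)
The plan is to prove the first part (approving a strict superset never helps) by a contrapositive argument: assuming that the ballot $A_1 = P \cup S$ with $S \neq \emptyset$ and $S \cap P = \emptyset$ yields outcome $W = P$, I would show that the ballot $A_1' = P$ also yields $W = P$. The second part (that approving a subset may help) is already witnessed by the example immediately preceding the theorem.

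I would proceed by induction on the round number $r$ of the $\rav$ procedure. Let $W^*_r$ and $W_r$ denote the partial committees after round $r$ when the manipulator votes $A_1$ and $A_1'$, respectively, and maintain the invariant $W_r = W^*_r$. Since $W^*_k = P$, every prefix $W^*_{r-1}$ is a subset of $P$, and hence by the invariant so is $W_{r-1}$.

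The key step compares, round by round, the weighted score each candidate receives under the two ballots. Because $W_{r-1} \subseteq P$, we have $|W_{r-1} \cap (P \cup S)| = |W_{r-1} \cap P|$, so the manipulator's reweighting factor $\frac{1}{1+|W_{r-1} \cap P|}$ is identical in both scenarios. Consequently every candidate in $P$ receives exactly the same score in both scenarios; every candidate in $S$ receives a strictly smaller score under $A_1'$ (the manipulator's positive contribution drops to $0$); and every candidate in $C \setminus (P \cup S)$ is unaffected. The round-$r$ winner $c^*_r$ in the original scenario lies in $P$, its score is preserved, and no competitor has gained score, so $c^*_r$ is still top-scoring in the new scenario.

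The main subtle point is tie-breaking. In the original scenario $c^*_r$ may have been selected from a tie by the fixed linear order favouring it. Under $A_1'$ the candidates in $S$ strictly drop out of any such tie, while $c^*_r$ and its competitors in $P \cup (C\setminus(P\cup S))$ keep their relative scores; hence the tie set in the new scenario is a subset of the original tie set, the tie-breaker picks the same $c^*_r \in P$, and the invariant is preserved. Closing the induction gives $W_k = P$, which contradicts the assumption that approving $P$ fails to elect $P$, completing the argument.
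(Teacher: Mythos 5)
Your proposal is correct: the paper itself omits the proof of this theorem, and your round-by-round induction supplies exactly the intended argument --- since the partial committee stays inside $P$, the manipulator's reweighting factor is identical under $P\cup S$ and under $P$, so candidates in $P$ keep their scores, candidates in $S$ only lose score, and the tie set at the top can only shrink while still containing the original round winner. Your handling of tie-breaking (the new tie set is a subset of the old one, so the fixed linear order selects the same candidate) is the one point that needs care, and you address it correctly; the subset direction is indeed witnessed by the example preceding the theorem.
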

	
\begin{theorem} \label{thm:RAV-respons}
    For $\rav$, \textsc{WM} is NP-hard.
\end{theorem}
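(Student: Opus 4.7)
The plan is to reduce from an NP-hard combinatorial problem such as \textsc{Exact Cover by 3-Sets} (X3C) or \textsc{Independent Set}; I would try X3C first because its cover semantics match the way \rav's harmonic reweighting accumulates. The goal is to construct a \rav\ instance with a single manipulator ($j=1$) in which the preferred candidate $p$ enters the winning set exactly when the manipulator's ballot, beyond $p$, encodes a valid exact cover of the X3C universe; hardness of \textsc{WM} for arbitrary $j$ then follows.

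Given an X3C instance with universe $X$ of size $3q$ and 3-set family $\mathcal{S}$, I would introduce a ``set candidate'' $c_S$ for each $S \in \mathcal{S}$, the preferred candidate $p$, a competitor $p'$, and a collection of padding candidates used to stabilise the behaviour of the first $q$ rounds. I would set $k = q+1$. The non-manipulator ballots come in two layers. First, for each element $x \in X$ there is an ``element agent'' who approves $p'$ together with every set candidate $c_S$ with $x \in S$; these $3q$ agents are the sole vehicle through which the manipulator can erode $p'$'s weighted support, since halving an element agent's weight requires having one of its approved set candidates elected. Second, I would add enough ``bulk'' agents approving only $p$, only $p'$, or providing balanced support to the set candidates so that, absent the manipulator, $p'$ beats $p$ in round $q+1$ by a fixed margin, and so that whichever $q$ set candidates the manipulator places on his ballot break the tie in their favour and are elected in rounds $1$ through $q$.

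For correctness, if the manipulator approves $p$ together with $q$ set candidates forming an exact cover, then each of the $3q$ element agents sees exactly one of its approved set candidates elected over rounds $1,\ldots,q$, and so enters round $q+1$ with weight $\nicefrac{1}{2}$; by calibration, this halving of the $3q$-strong bloc supporting $p'$ is exactly enough to tip the last round in favour of $p$. Conversely, if the approved set candidates do not form an exact cover, some element $x$ is left uncovered, its element agent retains weight $1$ into round $q+1$, and the original $p'$-versus-$p$ margin persists, so $p$ is not elected. Deviations such as approving fewer than $q$ set candidates, approving irrelevant padding candidates, or approving $p'$ can be ruled out by the padding: none of them halves a sufficient number of element-agent weights.

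The main obstacle is \rav's cross-round coupling: a single ballot must simultaneously force the intended $q$ set candidates to win rounds $1,\ldots,q$ \emph{and} leave round $q+1$ in a position where halving exactly the $3q$ element agents is pivotal. Because \rav\ weights take the form $1/(1+t)$, the manipulator's own weight decays as $1,\nicefrac{1}{2},\nicefrac{1}{3},\ldots$ through the first $q$ rounds, and the padding must be chosen to absorb this decay while also neutralising any mixed strategy (for instance, approving $p$ alone to inflate $p$'s weighted score, or approving some set candidates together with padding candidates to exploit the lexicographic tie-break). Inductively verifying that the intended candidate wins each of the first $q$ rounds against all remaining competitors, and certifying that the construction is robust to the given tie-breaking order, is where I expect the bulk of the technical work to lie.
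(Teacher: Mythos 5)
Your reduction source (X3C) and mechanism (a single manipulator whose ballot names the cover, detected by whether all $3q$ element agents have been halved entering the last round) differ from the paper's, which reduces from 3SAT: there, descending blocks of agents ($2n-i$ agents approving the two literal candidates of variable $i$, then $2n-w-j$ agents for clause $j$) force the rounds to proceed in a fixed order, the manipulator's vote merely breaks the tie between the positive and negative literal of each variable, and clause candidates then register satisfaction. Your plan is a legitimate alternative blueprint, but as stated it has an unresolved tension at its core that is more than routine bookkeeping. Write $\sigma$ for the common base score of the set candidates and $\pi$ for the base score of $p$. For the chosen set candidates to win rounds $1,\ldots,q$ you need them to beat $p$ in round~1, where the manipulator contributes a full point to $p$; and since the paper's tie-breaking convention favours the preferred candidate, you in fact need $\pi+1<\sigma+1$, i.e.\ $\pi<\sigma$ --- otherwise the manipulator simply approves $p$ alone (or $p$ plus nothing that gets elected), keeps his weight at $1$, and puts $p$ into round~1 with score $\geq\sigma$, electing $p$ with no cover at all. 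But then in round $q+1$ the manipulator's weight has decayed to $1/(q+1)$, so $p$ sits at $\pi+1/(q+1)<\sigma+1/(q+1)$, while any unelected set candidate disjoint from the elected ones still sits at its full $\sigma$ and beats $p$. Your sketch calibrates only the $p$-versus-$p'$ margin via the element agents; it never addresses the comparison between $p$ and the \emph{never-elected} set candidates, nor the ``conserve weight'' deviation in which the manipulator approves fewer than $q$ sets to keep $1/(1+|W\cap A_1|)$ large in the final round.

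The gap is probably repairable, but only by adding an idea you have not stated: in the exact-cover case every unelected set candidate also loses support, because all three of its element agents have been halved, dropping it to at most $\sigma-3/2$; choosing $\pi$ in the narrow window $(\sigma-3/2+1/(q+1)-1,\;\sigma-1)$ then lets $p$ clear the unelected set candidates in round $q+1$ while still losing round~1. Even then you must separately verify (i) that when the manipulator's chosen sets are \emph{not} pairwise disjoint, electing one of them erodes the element-agent support of another, so you cannot simply assert that ``whichever $q$ set candidates the manipulator places on his ballot'' are the ones elected in rounds $1,\ldots,q$; and (ii) that in the no-cover case $p'$ (or something other than $p$) actually wins round $q+1$ against all residual candidates. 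Until the window for $\pi$ and these deviation cases are pinned down with explicit numbers, the reduction's correctness is not established.
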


\begin{proof}
To show that $\rav$ is NP-hard to manipulate we reduce from 3SAT.
Given a instance of 3SAT with $w$ variables $\Phi = \{\phi_1, \ldots, \phi_w\}$,
$t$ clauses $\Psi = \{\psi_1, \ldots, \psi_t\}$, inducing $2w$ literals
$\{l_1, \ldots l_{2w}\}$. We construct an instance of $\rav$, $(C,\AB, k)$ 
where a manipulator's preferred candidate $p$ is in the winning set if and only if 
there is an assignment to the variables in $\Phi$ such that all clauses are satisfied.

For each variable $\phi_i$
introduce 2 candidates in $C$, corresponding to the positive
and negative literal of that variable, and $2n-i$ agents approving of the
2 candidates; note that $n \gg w+t $.
For each clause $\psi_j$ introduce two additional new
candidates, corresponding to the clause being satisfied or unsatisfied,
along with $2n - w - j$ new agents approving of both the two
new candidates.
Additionally for each clause $\psi_j$, we add an agent in $\AB$ approving
each of the candidates that correspond to the positive and
negative literals in $\psi_j$; this ensures that both the positive
and negative literal have the same weight of approval in the
set of agents.
We also need to add 2 agents approving of the candidate
corresponding to the negation of the clause to maintain the weighting.
Finally, add an extra 2
candidates to $C$, $a$ and $b$. We add 2 agents approving of the candidate corresponding to
a clause being unsatisfied, and 2 agents approving of the the candidate corresponding to
each clause being satisfied and approving of $b$

Add $t$ agents approving of $a$.
The size of the winning set $k$ is equal to $|\Phi| + |\Psi| + 1$.  Intuitively,
the manipulator must approve of a setting of all the variables in the original
3SAT instance that satisfies all the clauses, plus the preferred candidate.
We can now see that the manipulating agent is
only capable of ensuring candidate a is elected by computing a solution
to the initial 3SAT instance.
\end{proof}

The above proof also shows it is NP-hard to determine if $P$ can be 
made a subset of the winning set, $P \subseteq W$.






 \begin{table}[t]
	\centering
\begin{tabular}{lccc}
\toprule
&WD&WM&WSM\\
\midrule
$\av$&in P&in P&in P\\
$\sav$&in P&in P&NP-h\\
$\pav$&NP-h&coNP-h&coNP-h\\
$\rav$&in P&NP-h&-\\
\bottomrule
\end{tabular}
\caption{Summary of computational results for approval-based multi-winner rules for \textbf{W}inner \textbf{D}etermination, \textbf{W}inner \textbf{M}anipulation, and \textbf{W}inning \textbf{S}et \textbf{M}anipulation.}
 \label{table:summary:av}
 \end{table}

\section{Conclusions}

We have studied some basic computational questions
regarding three prominent voting
rules that use approval ballots to elect multiple winners.
We closed the computational complexity of
computing the winner for \pav and studied
the computational complexity of
computing a best response for a variety
of approval voting rules. In many
settings, we proved that it is NP-hard
for an agent or agents to compute how
best to vote given the other approval ballots.
 To complement this
complexity study, it would be interesting
to undertake further axiomatic and empirical analyses of \pav, \rav, and
\sav. Such an analysis would
provide further insight into the
relative merits of these rules.

\subsubsection*{Acknowledgements}
NICTA is funded by the Australian Government through the Department of Communications and the Australian Research Council through the ICT Centre of Excellence Program. 
Serge Gaspers is the recipient of an Australian Research Council Discovery Early Career Researcher Award (project number DE120101761). Joachim Gudmundsson is funded by the Australian Research Council (project number FT100100755).

\bibliographystyle{aaai}
\bibliography{abbshort,adt}
\end{document}